\documentclass{elsarticle}
\usepackage{amsthm}

\newtheorem{theorem}{Theorem}

\begin{document}

\begin{frontmatter}

\title{Bounds from a Card Trick}
\author{Travis Gagie}
\address{Department of Computer Science\\
    University of Chile}

\begin{abstract}
We describe a new variation of a mathematical card trick, whose analysis leads to new lower bounds for data compression and estimating the entropy of Markov sources.
\end{abstract}

\begin{keyword}
empirical entropy \sep estimating entropy \sep Markov sources
\end{keyword}

\end{frontmatter}

Several years ago, an article in the popular press~\cite{Ton05} described the following mathematical card trick: the magician gives a deck of cards to an audience member, who cuts the deck, draws six cards and lists their colours; the magician then says which cards were drawn.  The key to the trick is that the magician prearranges the deck so that the sequence of the cards' colours is a substring of a binary De Bruijn cycle of order six, i.e., so that every sextuple of colours occurs at most once.  Although the trick calls only for the magician to name the cards drawn, he or she could also name the next card, for example, with absolute certainty.  At the time we ran across the article, we were studying empirical entropy, and one way to define the $k$th-order empirical entropy of a string $s$ is as our expected uncertainty about the character in a randomly chosen position when given the preceding $k$ characters~\cite{Man01}.  After reading the trick's description, it occurred to us that the $k$th-order empirical entropy of any De Bruijn cycle of order at most $k$ is 0.  Using this and other properties of De Bruijn cycles, we were able to prove several lower bounds for data compression~\cite{Gag09,GM07}.  For example, since $\sigma$-ary De Bruijn cycles of order $k$ have length $\sigma^k$, there are \((\sigma!)^{\sigma^{k - 1}} / \sigma^k\) such sequences~\cite{AEB51} and \(\log_2 \left( (\sigma!)^{\sigma^{k - 1}} / \sigma^k \right) = \Theta (\sigma^k \log \sigma)\), a simple counting argument proves the following theorem.

\begin{theorem}[Gagie, 2006~\cite{Gag06}]
If \(k \geq \log_\sigma n\) then, in the worst case, we cannot store a $\sigma$-ary string $s$ of length $n$ in \(\lambda n H_k (s) + o (n \log \sigma)\) bits for any coefficient $\lambda$.
\end{theorem}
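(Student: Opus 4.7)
The plan is to exhibit an exponentially large family of strings on which the compression bound $\lambda n H_k(s) + o(n \log \sigma)$ degenerates to $o(n\log\sigma)$, and then to invoke the usual counting (pigeonhole) argument to show that no injective encoder can squeeze them into that many bits. The family will be the set of $\sigma$-ary De Bruijn cycles of order $\lfloor \log_\sigma n \rfloor$, viewed as strings of length $n$.

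First I would fix $k \geq \log_\sigma n$ and, for simplicity, assume $n = \sigma^{k_0}$ with $k_0 = \log_\sigma n$; the general case follows by a trivial padding/rounding argument that does not affect the $\Theta(n \log \sigma)$ count. Let $\mathcal{D}$ be the set of $\sigma$-ary De Bruijn cycles of order $k_0$, each of which has length exactly $n$. By the observation recalled in the excerpt, every $s \in \mathcal{D}$ has $H_{k_0}(s) = 0$, because every $k_0$-tuple occurs exactly once in the cycle and hence deterministically predicts its successor. The same determinism applies to contexts of any length $k \geq k_0$, since the last $k_0$ symbols of a length-$k$ context already determine the next symbol; therefore $H_k(s) = 0$ for every $s \in \mathcal{D}$ and every $k \geq \log_\sigma n$. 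In particular $\lambda n H_k(s) = 0$ for every such $s$ and every coefficient $\lambda$.

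Next I would quote the count from the excerpt: $|\mathcal{D}| = (\sigma!)^{\sigma^{k_0 - 1}}/\sigma^{k_0}$, whose logarithm is $\Theta(\sigma^{k_0} \log \sigma) = \Theta(n \log \sigma)$. Now suppose, toward a contradiction, that some encoder stores every length-$n$ string $s$ using at most $\lambda n H_k(s) + o(n \log \sigma)$ bits. Restricted to $\mathcal{D}$, this budget is $o(n \log \sigma)$ bits per cycle, so the encoder produces at most $2^{o(n \log \sigma)}$ distinct codewords in total. Since the encoding must be injective, this contradicts the lower bound $|\mathcal{D}| = 2^{\Theta(n \log \sigma)}$ as soon as $n$ is sufficiently large, and the worst-case claim follows.

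The proof is essentially bookkeeping once the main idea is in place, so there is no real obstacle; the one place calling for a small amount of care is verifying that $H_k(s) = 0$ persists for $k$ strictly larger than the De Bruijn order (handled by the observation that longer contexts preserve determinism) and that the cited counting formula still yields $2^{\Theta(n \log \sigma)}$ strings after rounding $\log_\sigma n$ to an integer. Both points are routine.
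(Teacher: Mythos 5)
Your proposal is correct and is essentially the paper's own argument: the paper sketches exactly this counting proof, noting that De Bruijn cycles of order at most $k$ have $H_k = 0$ and that there are $2^{\Theta(n \log \sigma)}$ of them of length $n = \sigma^k$, so no injective encoding can fit them all in $o(n \log \sigma)$ bits. Your added care about contexts longer than the De Bruijn order and about rounding is sound but routine, just as you say.
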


In this paper we consider a variation of the trick described above, that has led us to some new bounds.  This time, suppose the magician does not bother to prearrange the deck, but shuffles it instead and has the audience member draw seven cards; after the audience member lists the cards' colours, the magician has him or her replace the cards, cut the deck again and return it; the magician examines the deck and says which cards were drawn.  It is not hard to show that the probability of two septuples of cards having the same colours in the same order is at most \(1 / 128\) (even if the septuples overlap), so the probability only one sextuple has the colours listed is at least \(1 - 51 / 128 > 0.6\); thus, simply examining the deck gives the magician a better than even chance of guessing the cards drawn.  Our analysis is slightly pessimistic because the probability of two septuples' colours matching would be exactly \(1 / 128\) only if they were drawn with replacement; drawn without replacement, the probability of two cards' colours matching, for example, is \(25 / 51 < 1 / 2\).  Also, even if several sextuples have the colours listed, the magician still has some chance of guessing correctly from amongst them.

Now suppose we draw the $n$ characters of a string $s$ randomly from an alphabet of size $\sigma$.  By the same reasoning as above, the probability two $k$-tuples match is \(1 / \sigma^k\); by linearity of expectation, the expected number of matches is \({n \choose 2} / \sigma^k\).  The 0th-order empirical entropy of $s$
\[H_0 (s)
= (1 / n) \sum_a \mathrm{occ} (a, s) \log_2 (n / \mathrm{occ} (a, s))
\leq \log_2 \sigma\,,\]
where \(\mathrm{occ} (a, s)\) is the number of occurrences of character $a$ in $s$; the $k$th-order empirical entropy of $s$
\[H_k (s)
= (1 / n) \sum_{|\alpha|} |s_\alpha| H_0 (s_\alpha)\,,\]
where \(s_\alpha\) is the concatenation of characters immediately following occurrences in $s$ of the $k$-tuple $\alpha$.  Therefore, calculation shows
\[\mathrm{E} [H_k (s)]
\leq (1 / n) {n \choose 2} \log_2 \sigma
< (n / \sigma^k) \log_2 \sigma\,,\]
which implies the following theorem.

\begin{theorem}
If \(k \geq (1 + \epsilon) \log_\sigma n\) then, in the expected case, we cannot store a $\sigma$-ary string $s$ of length $n$ in \(\lambda n H_k (s) + o (n \log \sigma)\) bits for any coefficient \(\lambda = o (n^\epsilon)\).
\end{theorem}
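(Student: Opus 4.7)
The plan is to use the bound \(\mathrm{E}[H_k(s)] < (n/\sigma^k)\log_2 \sigma\) already derived in the text, combined with Shannon's classical lower bound, to derive a contradiction. Throughout I draw \(s\) uniformly at random from \(\{1,\dots,\sigma\}^n\); this is the distribution under which the preceding expectation calculation is carried out, and it has Shannon entropy exactly \(n \log_2 \sigma\), so every uniquely decodable encoder \(E\) must satisfy \(\mathrm{E}[|E(s)|] \geq n \log_2 \sigma\).

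First I would substitute the hypothesis \(k \geq (1+\epsilon)\log_\sigma n\) into the bound on \(\mathrm{E}[H_k(s)]\). This gives \(\sigma^k \geq n^{1+\epsilon}\), and hence
\[\mathrm{E}[H_k(s)] < (n/\sigma^k)\log_2 \sigma \leq n^{-\epsilon} \log_2 \sigma\,.\]
Next, assume for contradiction that some encoder stores \(s\) in \(\lambda n H_k(s) + o(n \log \sigma)\) bits for a coefficient \(\lambda = o(n^\epsilon)\). Taking expectations and using linearity,
\[\mathrm{E}\bigl[\lambda n H_k(s) + o(n \log \sigma)\bigr]
\leq \lambda n \cdot n^{-\epsilon} \log_2 \sigma + o(n \log \sigma)
= \lambda n^{1-\epsilon} \log_2 \sigma + o(n \log \sigma)\,.\]
Since \(\lambda = o(n^\epsilon)\), the first summand is \(o(n \log \sigma)\), so the whole expected code length is \(o(n \log \sigma)\), contradicting Shannon's bound \(n \log_2 \sigma\).

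The argument is essentially a concatenation of (a) the expected empirical-entropy estimate given in the text and (b) Shannon's source-coding lower bound, so the calculation itself is routine. The main subtlety I would want to double-check is that the \(o(n \log \sigma)\) additive term in the hypothesized code length is really uniform over \(s\) (so that taking the expectation preserves it) and that the "expected case" formulation in the theorem matches the uniform-distribution argument used here; once that is granted, the chain of inequalities goes through with no further work.
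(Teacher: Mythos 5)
Your proposal is correct and follows essentially the same route as the paper: the paper's proof likewise takes expectations under the random-string model, observes that \(\mathrm{E}[\lambda n H_k(s) + o(n\log\sigma)] = o(n\log\sigma)\) when \(k \geq (1+\epsilon)\log_\sigma n\) and \(\lambda = o(n^\epsilon)\), and contrasts this with the \(\Theta(n\log\sigma)\) expected bits needed to store \(s\). You have merely made explicit the substitution \(\sigma^k \geq n^{1+\epsilon}\) and the appeal to Shannon's source-coding bound that the paper leaves implicit.
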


\begin{proof}
If \(k \geq (1 + \epsilon) \log_\sigma n\) and \(\lambda = o (n^\epsilon)\), then
\[\mathrm{E} \left[ \rule{0ex}{2ex} \lambda n H_k (s) + o (n \log \sigma) \right]
= o (n \log \sigma)\,,\]
but the expected number of bits needed to store $s$ is \(\Theta (n \log \sigma)\).
\end{proof}

\noindent Similarly, by the union bound, the probability there are any matching $k$-tuples at all in $s$ is at most \({n \choose 2} / \sigma^k\), so the probability that \(H_k (s) = 0\) is at least \(1 - {n \choose 2} / \sigma^k\), implying the following theorem.

\begin{theorem}
If \(k \geq (2 + \epsilon) \log_\sigma n\) for some positive constant $\epsilon$ then, with high probability, we cannot store a $\sigma$-ary string $s$ of length $n$ in \(\lambda n H_k (s) + o (n \log \sigma)\) bits for any coefficient $\lambda$.
\end{theorem}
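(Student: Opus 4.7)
The plan is to combine the probability estimate stated just before the theorem with a counting-based incompressibility argument for a uniformly random string.

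First I would verify that with $k \geq (2 + \epsilon) \log_\sigma n$ we have $\sigma^k \geq n^{2 + \epsilon}$, so the bound ${n \choose 2} / \sigma^k$ supplied just above is $O(n^{-\epsilon}) = o(1)$. Hence with probability $1 - o(1)$ every $k$-tuple of consecutive characters in $s$ is distinct. On that event each $s_\alpha$ has length at most one, each $H_0(s_\alpha)$ vanishes, and therefore $H_k(s) = 0$. Consequently $\lambda n H_k(s) + o(n \log \sigma) = o(n \log \sigma)$, independent of $\lambda$.

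Next I would invoke the standard counting lower bound on lossless encodings: for any prefix-free code $C$, at most $2^{L+1}$ distinct strings receive codewords of length at most $L$, so the fraction of the $\sigma^n$ possible strings that $C$ compresses to $L$ bits is at most $2^{L + 1 - n \log_2 \sigma}$. Choosing any $L(n)$ that is $o(n \log \sigma)$ but with $n \log_2 \sigma - L(n) \to \infty$ makes this fraction $o(1)$. A union bound over the two $o(1)$-probability events --- a $k$-tuple collision in $s$ and $|C(s)| \leq L$ --- then shows that with probability $1 - o(1)$ neither holds, which is the theorem.

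The only step that needs real care is this final bookkeeping, together with fixing the interpretation of ``we cannot store $s$'' in the same ``for every fixed encoding scheme'' sense used in the preceding theorem's proof; the probabilistic substance is already contained in the union-bound estimate displayed before the statement.
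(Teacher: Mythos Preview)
Your proposal is correct and follows essentially the same two-step structure as the paper: first use the union-bound estimate to conclude that \(H_k(s)=0\) (hence \(\lambda n H_k(s)+o(n\log\sigma)=o(n\log\sigma)\)) with probability \(1-o(1)\), then observe that a uniformly random string nonetheless requires \(\Theta(n\log\sigma)\) bits with probability \(1-o(1)\). The only difference is cosmetic: the paper states the incompressibility fact in one line, whereas you spell out the prefix-free counting argument and the final union bound explicitly.
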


\begin{proof}
If \(k \geq (2 + \epsilon) \log_\sigma n\) for some positive constant $\epsilon$, then
\[\lambda n H_k (s) + o (n \log \sigma)
= o (n \log \sigma)\]
with probability at least \(1 - 1 / n^\epsilon = 1 - o (1)\); however, the number of bits needed to store $s$ is \(\Theta (n \log \sigma)\) with probability \(1 - o (1)\).
\end{proof}

The upper bound above on the probability there are any matching $k$-tuples also quickly yields an exponential lower bound on the sample complexity of estimating the entropy of Markov sources.  This stands in contrast to, e.g., the Shannon-McMillan-Breiman Theorem (see, e.g.,~\cite{CT06}) and bounds for estimating the entropy of a probability distribution~\cite{BDKR05,RRSS09,Val08}.  Although many papers have been written about estimating the entropy of a Markov source (see, e.g.,~\cite{CKV04} and references therein), we know of no previous lower bounds comparable to the one below.

\begin{theorem} \label{thm:estimation}
Suppose a $\sigma$-ary string $s$ is generated either by a deterministic $k$th-order Markov source (which has entropy 0) or by an unbiased memoryless source (which has entropy \(\log_2 \sigma\)).  No algorithm can guess the type of source with probability at least \(2 / 3\) without reading \(\Omega (\sigma^{k / 2})\) characters.
\end{theorem}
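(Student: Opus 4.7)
The plan is to reduce the theorem to the bound on matching $k$-tuples stated in the paragraph immediately before it. Let $\Sigma$ denote the alphabet, with $|\Sigma| = \sigma$. I will analyze the following mixture over instances: flip a fair coin; on heads, generate $s$ from the unbiased memoryless source; on tails, pick a transition function $f : \Sigma^k \to \Sigma$ and an initial $k$-tuple uniformly and independently at random, and let $s$ be generated by the deterministic $k$th-order Markov source with transitions $f$ started from that initial state. Every realization of the second process is a deterministic $k$th-order Markov source and so has entropy $0$, so this mixture instantiates the two hypotheses of the theorem with equal probability; a lower bound against it implies the lower bound claimed in the theorem.

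The key observation is that, conditioned on no $k$-tuple occurring twice in $s$, the distribution of $s$ under the random deterministic Markov source is \emph{exactly} the unbiased memoryless distribution on length-$n$ strings. In both cases the first $k$ characters are i.i.d.\ uniform. Under the Markov source each later character $s_{k+i}$ equals $f(s_i \cdots s_{i+k-1})$, and since no $k$-tuple repeats, every such argument is a fresh query to $f$, giving an independent uniform value, just as in the memoryless case.

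Consequently the total variation distance between the two candidate distributions of $s$ is at most the probability that some $k$-tuple does repeat in $s$, which by the union bound from the paragraph preceding the theorem is at most ${n \choose 2} / \sigma^k$. Since no statistical test can distinguish two distributions of total variation distance $d$ with success probability better than $(1 + d) / 2$, a success probability of at least $2/3$ forces ${n \choose 2} / \sigma^k \geq 1/3$, which rearranges to $n = \Omega(\sigma^{k/2})$.

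The one delicate point is the coupling in the second paragraph: one must verify that under the random Markov source every sampled character truly comes from an independent, uniform evaluation of $f$. This needs both the random choice of initial $k$-tuple (so that the first $k$ characters are uniform) and the no-repeat conditioning (so that each later character is a fresh query to $f$). Once the coupling is established, the rest is just the union bound quoted from the previous paragraph together with the standard total variation inequality.
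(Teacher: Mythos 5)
Your proof is correct, but it follows a genuinely different route from the paper's. The paper argues by contradiction and extraction: assuming an algorithm that succeeds with probability at least \(2/3\) after reading \(o(\sigma^{k/2})\) characters of a uniformly generated string, it uses the no-repeat bound to find one \emph{particular} string $s$ with no repeated $k$-tuple on which the algorithm outputs ``memoryless'' with probability nearly \(2/3\), and then builds a single deterministic $k$th-order Markov source emitting exactly $s$ with probability 1, on which the algorithm errs. You instead run a Yao-style distributional indistinguishability argument: you place a uniform prior on deterministic sources (random transition function $f$ and random seed), show that this ensemble assigns probability exactly \(\sigma^{-n}\) to every repeat-free string --- which is best stated via lazy sampling of $f$, namely \(\Pr_{\mathrm{Markov}}[s] = \sigma^{-k} \cdot \sigma^{-(n-k)}\) for each such $s$, rather than as a conditional statement, since conditioning on a future-dependent event is exactly the delicate point you flag --- and conclude that the total variation distance between the two hypotheses is at most the repeat probability \({n \choose 2}/\sigma^k\), so that a success probability of \(2/3\) forces \(n = \Omega(\sigma^{k/2})\). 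Your approach needs this extra exact-coupling fact (the paper only needs that each fixed repeat-free string is realizable by \emph{some} deterministic source), but it buys explicit constants, cleanly handles randomized and adaptive readers via the \((1+d)/2\) bound, and shows the problem is hard even on average over a natural ensemble of deterministic sources; the paper's argument is more elementary and exhibits a concrete adversarial source for each purported algorithm. One step worth making explicit in your write-up is the averaging reduction: an algorithm succeeding with probability at least \(2/3\) on the memoryless source and on \emph{every} deterministic $k$th-order source also succeeds with probability at least \(2/3\) against your fair-coin mixture, which is what the total variation bound contradicts.
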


\begin{proof}
Suppose there is an algorithm that guesses correctly with probability at least \(2 / 3\) after reading \(o (\sigma^{k / 2})\) characters, when they are generated by an unbiased memoryless source.  By the upper bound above, with high probability the string generated will not contain any matching $k$-tuples.  It follows that we can find a particular string $s$ of length \(o (\sigma^{k / 2})\) containing no matching $k$-tuples and such that, with probability nearly \(2 / 3\), the algorithm classes $s$ as having come from an unbiased memoryless source.  Since $s$ contains no matching $k$-tuples, we can build a deterministic $k$th-order Markov source that generates $s$ with probability 1; on this source, the algorithm errs with probability nearly \(2 / 3\).
\end{proof}

\section*{Acknowledgments}

\noindent Many thanks to Paulina Arena and Elad Verbin for helpful discussions.

\bibliographystyle{elsarticle-num}
\bibliography{trick}

\end{document}